\newtheorem{theorem}{Theorem}
\newcommand{\ii}{\mathrm i}
\title{Hamiltonian-Oriented Homotopy QAOA}
\author{Akash Kundu$^{1,2}$\footnote{A. Kundu and L. Botelho contributed equally to this work}, Ludmila Botelho$^{1,2 *}$\thanks{corresponding author, lbotelho@iitis.pl}, Adam Glos$^{1,3}$}
\affil{$^1$Institute of Theoretical and Applied Informatics, Polish Academy of Sciences, Ba\l tycka~$5$, Gliwice, Poland}
\affil{$^2$Joint Doctoral School, Silesian University of Technology,\\Akademicka $2$A, Gliwice, Poland}
\affil{$^3$Algorithmiq Ltd, Kanavakatu 3C 00160 Helsinki, Finland}
\date{}
\begin{document}
\maketitle
\begin{abstract}
	The classical homotopy optimization approach has the potential to deal with highly nonlinear landscape, such as the energy landscape of QAOA problems. Following this motivation, we introduce Hamiltonian-Oriented Homotopy 
	QAOA (HOHo-QAOA), that is a heuristic method for combinatorial optimization using QAOA, based on classical homotopy optimization. The method consists of a homotopy map that produces an optimization problem for each value of interpolating parameter. Therefore, HOHo-QAOA decomposes the optimization of QAOA into several loops, each using a mixture of the mixer and the objective Hamiltonian for cost function evaluation. Furthermore, we conclude that the HOHo-QAOA improves the search for low energy states in the nonlinear energy landscape and outperforms other variants of QAOA.
\end{abstract}

\section{Introduction}
Speedup of practical applications is yet to be realized for quantum devices as they are small and noise prone. The limitations of available hardware initiated the Noisy Intermediate Scale Quantum (NISQ) era \cite{preskill2018quantum}. The NISQ algorithms \cite{bharti2022noisy} can operate on limited amount of resources., in particular  by  distributing tasks between quantum and classical devices.
Many of those algorithms are represented by a broad class of variational quantum algorithms (VQAs)~\cite{cerezo2021variational}.
Their generic structure consists of two subroutines: a parametric quantum circuit (PQC) is implemented on quantum hardware that generates a quantum state, and classical hardware calculates the cost function and optimizes the parameters of PQC.
One of the advantages of VQAs is that they can be easily adapted to various computational problems as long as the Hamiltonian can be designed whose ground state corresponds to the solution of the problem. To mention a few, VQAs has potential applications in finding the ground state of a molecule~\cite{kandala2017hardware}, solving linear~\cite{bravo2019variational} and nonlinear~\cite{lubasch2020variational} system of equations, quantum state-diagonalization~\cite{larose2019variational}, and quantum device certification~\cite{kundu2022variational}. A detailed review can be found in~\cite{cerezo2021variational}.

Quantum approximate optimization algorithm (QAOA)~\cite{farhi2014quantum} is a variational quantum algorithm dedicated to combinatorial optimization problems. The PQC in QAOA is a trotterized adiabatic evolution i.e. the circuit consist of interchangeably applied so-called mixer and problem Hamiltonians. It has potential application in solving problems like graph coloring \cite{farhi2001quantum, tabi2020quantum, bako2022near}, MaxE3Lin2~\cite{farhi2014maxkcut}, Max-$K$-Vertex Cover~\cite{cook2020quantum}, or traveling salesman problem~\cite{glos2022space, bako2022near}. To improve the performance of QAOA, multiple optimization strategies have been introduced~\cite{medvidovic2021classical,wang2018quantum,alam2020accelerating,shaydulin2019multistart,hegade2021portfolio,zhu2022adaptive,zhouprx2020, zhou2022qaoa, patel2022rlrqaoa}. This is becuase given the limited resources of quantum computers it is essential to effectively explore the landscape of cost function for PQC. On the othe hand, the landscape of energy function in QAOA is highly nonlinear and to deal with such complicated landscapes, sophisticated methods are necessary. 

This motivate us to formulate a heuristic optimization strategy that uses classical homotopy optimization for  QAOA. The homotopy optimization has potential application in dealing with highly nonlinear functions \cite{layne1989optimization}. 
The homotopy method comprises a homotopy map, which for each value of interpolating parameter $\alpha\in[0,1]$  outputs an optimization problem. In particular, for $\alpha=0$, the problem is easy-to-solve, and for $\alpha=1$ the homotopy map returns the problem of interest. During the interpolation process, which changes the value of $\alpha$ from 0 to 1, the solution continuously changes and is expected to be optimal, or close to optimal for the intermediate problems.  If the intermediate optimization succeed, in the end we obtain the optimum of the target problem. One can see quantum annealing as a particular type of homotopy optimization. A homotopy optimization for VQE was already proposed in~\cite{garcia2018addressing} and improved in~\cite{mcclean2016theory, harwood2022improving}. However, its applicability for QAOA was only briefly mentioned in~\cite{mcclean2021low}.

The introduced Hamiltonian-Oriented Homotopy QAOA (HOHo-QAOA) decomposes the optimization into several loops. The homotopy map smoothens between the mixer Hamiltonian and the problem Hamiltonian during the optimization and each loop uses the mixture of these two Hamiltonians for cost function evaluation. In each loop, the quantum state is optimized with respect to such intermediate cost functions. This strategy simplifies the search for good QAOA parameters while keeping the PQC unchanged. To show this, first we empirically analyze the impact of the choice of the homotopy parameters: the initial $\alpha_\text{init}$ value and the step parameter $\alpha_\textrm{step}$ which defines the difference between two consecutive $\alpha$ values. Although theoretically, a choice of $\alpha_\text{init}$ and $\alpha_\text{step}$ very close to zero provides a better approximation to the optimal solution, empirically we show that one can still get a good approximation to the optimal solution even if $\alpha_\text{init}$ and $\alpha_\text{step}$ are detached from zero. This hugely reduces the computational cost of HOHo-QAOA. Finally, we compare HOHo-QAOA with other commonly used QAOA optimization strategies~\cite{farhi2014quantum,zhouprx2020}.

The rest of the paper is organized in the following way. In Section~\ref{sec:prelims}, we provide a brief overview of the adiabatic quantum computing, variants of QAOA and the homotopy method. Throughout Section~\ref{sec:nc-qaoa}, we numerically investigate the efficient settings of the homotopy parameters. Furthermore, we compare HOHo-QAOA with the other variants of QAOA considered in the literature. Finally, we conclude the article in Section~\ref{sec:conclusion}.
	
\section{Preliminaries}\label{sec:prelims}
\subsection{QAOA}\label{sec:aqc}
The core concept of Adiabatic Quantum Computing (AQC) lies in the adiabatic theorem. Let us consider $H(s) = H(t/T)$, a time-dependent smoothly varying Hamiltonian for all $t\in[0,T]$ i.e. $s\in[0,1]$, where $T$ is the total time of evolution. Let us denote by   $\ket{E_i(s)}$ an eigenvector of $H(s)$ with corresponding eigenvalue $E_i(s)$, where we assume $E_0(s)\leq E_1(s)\leq\ldots$. The adiabatic theorem roughly states that a system that is initially prepared in $\ket{E_0(0)}$ of $H(t=0)$, after time-evolution that is piloted by Schr\"odinger equation with the given Hamiltonian $H(s)$, will approximately keep the state of the system in the $\ket{E_0(1)}$ at $t=T$, provided that the change in $H(s)$ is ``sufficiently slow". Traditionally the sufficiently slow change is given by the condition~\cite{messiah1961quantum, albash2018adiabatic}
\begin{equation}
	T\gg\Delta^{-2}\underset{s\in[0,1]}{\text{max}}\norm{ \left[\frac{\partial H(s)}{\partial t}\right]^2 },\label{eq:aqc-condition}
\end{equation}
where $\Delta=\text{min}_s\left(E_1(s)-E_0(s)\right)$, is the spectral gap.
A class of independent conditions on $T$ has been discussed in \cite{marzlin2004inconsistency, tong2005quantitative, du2008experimental,wu2008adiabatic}. AQC has the potential to take a initial Hamiltonian say $H_\text{mix}$ whose ground state is easy-to-prepare to the ground state of a computationally hard problem Hamiltonian $H_\text{obj}$. A particular time-dependent Hamiltonian interpolates between the $H_\text{mix}$ and $H_\text{obj}$ as 
\begin{equation}
	H(s) = \left( 1-s\right)H_\textrm{mix} +  sH_\textrm{obj},
	\label{eq:qa}
\end{equation}
AQC in the form of quantum annealing has been used for a variety of applications including real-world problems~\cite{salehi2022unconstrained, domino2021quantum, domino2022quadratic, borowski2020new, Arya2022, glos2022testvehicle}, and in quantum chemistry~\cite{babbush2014adiabatic}. For a rigorous review of AQC check~\cite{dascolloquium-qa-aqc, albash2018adiabatic}.

The Quantum Approximate Optimization Algorithm (QAOA), uses the first order Suzuki-Trotter transformation of  $\textrm{exp} ( -\ii H(s))$ as the variational ansatz to solve combinatorial optimization problems. The trotterization gives rise to the operators $\textrm{exp} ( -\ii\gamma_j H_\textrm{obj})$ and $\textrm{exp}( -\ii\beta_j H_\textrm{mix})$, where $\gamma_j$ is the parameter corresponding to objective Hamiltonian and $\beta_j$ corresponds to mixer Hamiltonian for $j$-th step. The mixer Hamiltonian is traditionally expressed as $H_\textrm{mix} = -\sum_iX_i$, where $X_i$ is Pauli $X$ operator acting on $i$-th qubit and $H_\textrm{obj}$ is the objective Ising Hamiltonian whose ground state encodes the optimal solution of the problem. This results in state
	\begin{equation}
	\lvert \vec{\gamma},\; \vec{\beta}\rangle = \prod_{j=1}^{L} \textrm{exp}\left( -\ii\beta_j H_\textrm{mix}\right) \textrm{exp}\left( -\ii\gamma_j H_\textrm{obj}\right)\lvert+\rangle^{\otimes N}\label{eq:qaoa-suzuki-trotter},
\end{equation}
where $N$ is the number of qubits, $L$ is the number layers that defines the number of repeated application of mixer and objective Hamiltonian, and $\lvert+\rangle^{\otimes N}$ is the ground state of $-\sum_iX_i$. The algorithm utilizes quantum hardware to evaluate the energy expectation value $E( \vec{\gamma}, \vec{\beta} ) = \langle \vec{\gamma},\vec{\beta}\lvert H_\textrm{obj} \rvert \vec{\gamma},\vec{\beta} \rangle$. Then the parameters $\vec{\gamma}$ and $\vec{\beta}$ are optimized using classical optimization methods so that the energy is minimized. This energy evaluation along with classical optimization  QAOA is well defined for any combinatorial optimization problems as long as $H_\textrm{obj}$ can be implemented efficiently. While the proposed $X$-mixer combined with 2-local Ising model is frequently used in the literature, different choices were also considered \cite{hadfield2019quantum,bako2022near, bartschi2020grover,wang2020x,glos2022space}. 

Heuristic learning of QAOA has been explored in trajectories QAOA (T-QAOA)~\cite{zhouprx2020}. T-QAOA is a heuristic strategy that utilizes interpolation-based prediction of good QAOA parameters.
With the random initialization, the cost for optimization of QAOA is exponential in the number of layers of QAOA~\cite{zhouprx2020}.  On the other hand, with increased number of layers, $H_\text{mix}$ may gradually turn off while the $H_\text{obj}$ turns on, which is reminiscent of AQC. However, QAOA could learn via following a diabatic path to achieve higher success probability \cite{crosson2014different, PhysRevX.6.031010, RevModPhys.90.015002}, which is beyond the adiabatic process natural for AQC. This fact was used in T-QAOA by reusing the optimal angles found for $L$-layers in the $(L+1)$-layers PQC. 

The T-QAOA variant considered in this paper runs as follow. It starts with a number of layer $L_0$, and finds the locally optimal parameters $(\vec{\gamma}^{L_0}, \vec{\beta}^{L_0})$. Then it uses the optimal parameters of layer $L_0$ to construct the initial parameters for the layer $L_0+1$ by sampling the last entries of $\vec{\gamma}^{L_0+1}$ from a uniform random distribution and setting  $\vec{\beta}^{L_0+1}=0$. With such initialization, the $(L_0+1)$-th layer PQC is optimized, and the procedure is repeated until a final number of layer $L$ is reached. Note that different interpolation method can be used~\cite{zhouprx2020}.

Note that for QAOA, the energy landscape with respect to a single parameter $\theta$ is related to the following process. First, an initial quantum state is prepared. Then, if applicable, all the unitary operations which precedes the $\theta$-dependent operation are applied, which transforms the
\begin{figure}[tbh!]
	\centering
	\includegraphics{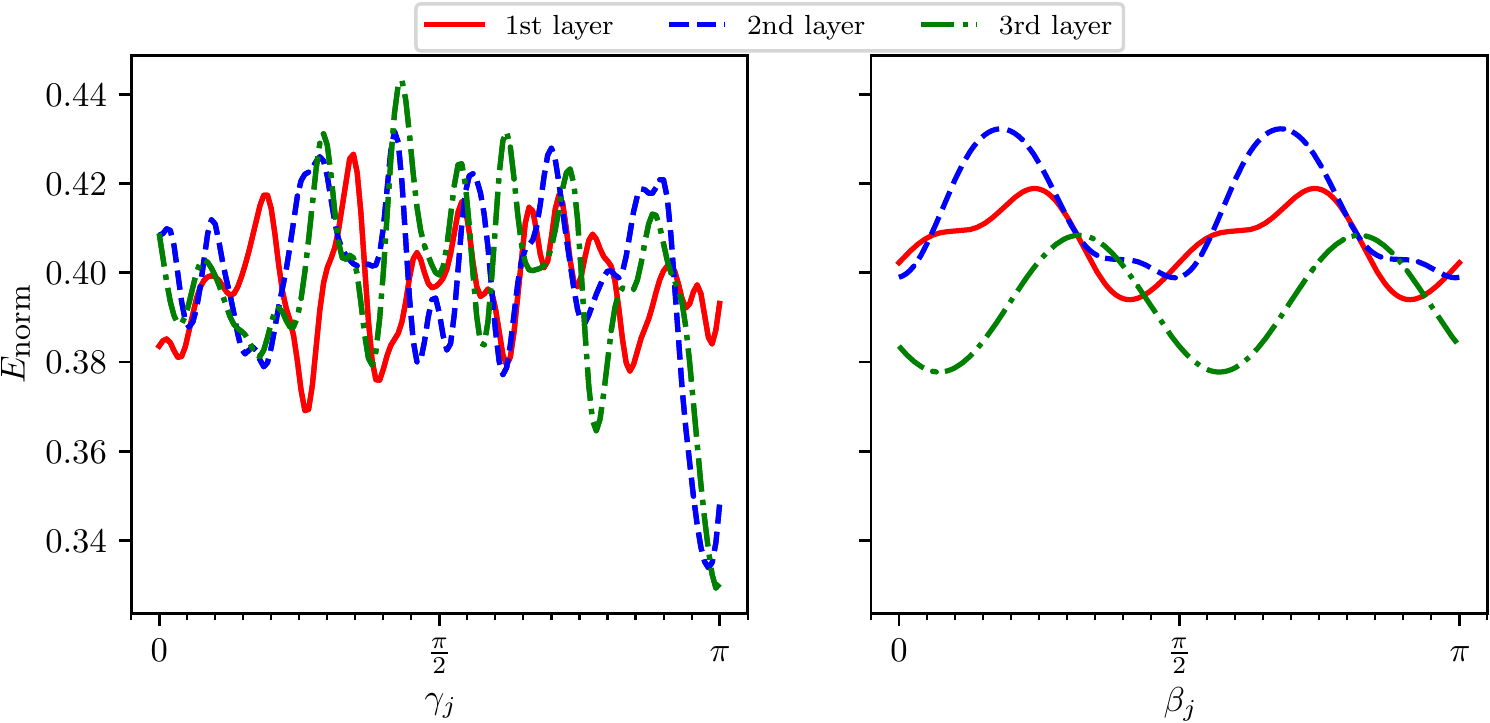}
	\caption{Illustration of highly nonlinear energy landscape of QAOA for Max-Cut for $10$ nodes with weighted Barab\'asi-Albert graph for objective Hamiltonian (left) and mixer Hamiltonian (right). $E_{norm}$ is a standarized energy of the objective Hamiltonian, so that the eigenvalues lies in $[0,1]$ }
	\label{fig:non-linear_energy_landscape}
\end{figure}
initial state into a different state (possibly a mixed state for noisy evolution). Afterwards, under an assumption of pure evolution, a unitary $\exp(- \ii \theta H)$ for mixer or objective Hamiltonian $H$ is applied. Finally, the remaining operations are applied and the energy estimation with respect to observable is conducted. As shown in the Appendix~\ref{apndx:proof-non-linear-lanscape}, the energy function with respect to $\theta$ takes the form
\begin{equation}
	C+ \sum_{i>j} A_{i,j}\cos(\theta(E_i-E_j) + B_{i,j})\label{eq:non-linear-landscape},
\end{equation}
in which $\{E_i\}$ is the set of all eigenvalues of the operator $H$, and real parameters $C,A_{i,j}, B_{i,j}$ depend on the initial state, observable, and $\theta$-independent quantum operations. Note that Eq.~\eqref{eq:non-linear-landscape} is
highly nonlinear, therefore its optimization may be difficult in practice. This is in contrast to typically used VQE approaches, in which the parameter-dependent unitary can be reduced to a single-qubit gate, which in turn may result in a simple, yet powerful gradient-free optimization technique~\cite{ostaszewski2021structure, nakanishi2020sequential}.

Unfortunately, the number of cosines in Eq.~\eqref{eq:non-linear-landscape} may grow quadratically with number of distinct eigenvalues of the considered Hamiltonian. In the case of the objective Hamiltonian the number may be particularly high. While for many simple problems like unweighted Max-Cut or Max-SAT the number of different eigenvalues usually grows polynomially with the size of the data, for weighted Max-Cut each partition may result in a different objective value, which may give $\order{2^n}$ different energies in general. A complicated energy landscape can be seen already even for a small and simple instance, see Fig.~\ref{fig:non-linear_energy_landscape}. For problems generating such a complicated landscapes, more sophisticated methods may be at hand.

\subsection{Homotopy optimization method}
One of the well-known methods to solve a system of highly nonlinear problems is \textit{homotopy optimization}, where a homotopy map is constructed between two systems. The solution corresponding to one of the systems is transformed into the solution of the other system.  For example, consider the function $f_\text{targ}(x)$ which encodes a computationally hard problem and $f_\text{init}(x)$ which is a problem with an easy-to-find solution. Then the particular homotopy map between the systems can be given as
\begin{equation}
	\mathcal{F}(\alpha, x) = g_1(\alpha)f_\text{targ}(x) + g_2(\alpha)f_\text{init}(x),\;\;\;\;\; 0\leq\alpha\leq1,
\end{equation}
where  
\begin{equation}
	\begin{split}
	&g_1(0) = 0,\quad g_2(0) = 1,\\ 
	&g_1(1) = 1,\quad g_2(1) = 0.	
	\end{split}
\end{equation}
Here, we get a family of problems corresponding to $\min_x\mathcal{F}(\alpha, x)=0$ for each $\alpha$ value from $0$ to $1$. We track the optimized solutions starting from $\left(\alpha, x\right)$ = $\left(0, x_0\right)$, as $\alpha$ moves from $0$ to $1$, which for a successful homotopy map leads to $\left(\alpha, x\right)$ = $\left(1, x_1\right)$, where $x_1$ is ideally the optimal solution of $f_\text{targ}$.

The state-of-the-art approach is to start from $\left(\alpha_\text{init}, x_\text{init}\right)$ with $x_{\rm init}$ minimizing $\mathcal{F}(0,x) = f_{\rm init}(x)$. Then the problem $\min_x\mathcal{F}(\alpha +\alpha_\text{step}, x)=0$ is iteratively solved using the solution of $\min_x\mathcal{F}(\alpha,x)$ as a starting point, for sufficiently small $\alpha_\text{step}>0$~\cite{layne1989optimization}. 

\section{ Hamiltonian-Oriented Homotopy QAOA}\label{sec:nc-qaoa}
\subsection{Proposed method}
The Hamiltonian-oriented homotopy QAOA decomposes the optimization process of the objective Hamiltonian into several optimization loops. Each loop optimizes the energy
\begin{equation}
	E_{\alpha}( \vec{\gamma}, \vec{\beta} ) = \langle \vec{\gamma},\vec{\beta}\lvert H(\alpha) \rvert \vec{\gamma},\vec{\beta} \rangle,\label{eq:mc-qaoa-energy-expectation}
\end{equation}
where $H(\alpha)$ encodes the homotopy map
\begin{equation}
	H(\alpha) = g_1(\alpha)H_\text{mix}+g_2(\alpha) H_\text{obj},\;\;\; 0\le\alpha\le1 \label{eq:nc-aqc}.
\end{equation}
For $\alpha=1$ the expectation value in Eq.~\eqref{eq:mc-qaoa-energy-expectation} is the energy corresponding to the $H_\text{obj}$. While there is a freedom in the choice of $g_1$ and $g_2$, throughout the paper we a simple case 
\begin{equation}
	g_1(\alpha) = 1-\alpha,\;\;\; g_2(\alpha)=\alpha.
\end{equation}
During the optimization process, we choose an initialization of mixer and objective parameters (at $\alpha=0$) in such a way that the parameters corresponding to the mixer are sampled from the uniform random distribution $\text{U}(a,b)$ in an interval $[a=0, b=2\pi]$ and the objective parameters are all set to $0$. With this initialization we make sure that the homotopy starts from the exact ground state of the mixer on a noise-free setting, as application of mixer on its eigenstate does not change the state. For $\alpha'>\alpha\geq0$ the initial parameters are chosen as
\begin{equation}
	( \vec{\gamma},\vec{\beta} )^\text{init}_{\alpha^\prime} = ( \vec{\gamma},\vec{\beta} )^{\ast}_\alpha,
\end{equation}
here $\ast$ denotes the optimal parameters for $\alpha$. 

It should be noted that each run of HOHo-QAOA follows the generic structure of homotopy process as in Eq.~\eqref{eq:nc-aqc} where the ``run-time" of HOHo-QAOA is characterized by the $\alpha_\text{step}$, for a fixed $\alpha_\text{init}$. The parameter $\alpha_\text{init}$ fixes the initial $\alpha$ value. Generally, it can be inferred that better approximation to the optimal solution can be achieved if we choose a sufficiently small value of $\alpha_\text{step}$ and $\alpha_\text{init}$. They can be described in a more elaborated way as follows. Small value of $\alpha_\text{step}$ helps us realizing the homotopy of Eq.~\eqref{eq:nc-aqc} and at the same time if we initiate with $\alpha_\text{init}\rightarrow0$, it becomes easier to find the ground state for the first step. To show this, throughout the paper, we investigate the normalized energy
\begin{equation}
	E_\text{norm}(E_\alpha(\vec \gamma,\vec \beta),\alpha) = \frac{E_\alpha(\vec \gamma,\vec \beta) - \text{min}H(\alpha)}{\text{max}H(\alpha)- \text{min}H(\alpha)}, \label{eq:energy_norm}
\end{equation}
with respect to parameters of HOHo-QAOA, where $E_\text{norm}(\alpha) = 0$, is the normalized ground energy for any $\alpha\in[0,1]$, and $\min H(\alpha)$ ($\max H(\alpha)$) denotes minimum (maximum) of $H(\alpha)$.

\subsection{Initialization strategy}\label{subsec:init_strategy}
In the following, first we numerically discuss proposed settings for initial QAOA parameters $( \vec \gamma,\vec\beta)^\text{init}$. With this setting we show that the homotopy parameters i.e. $\alpha_\text{init}, \; \alpha_\text{step}$ can be chosen detached from zero without compromising the efficiency of the method. We consider and optimized energy $E^\ast _{\rm norm}$, or in the case of HOHo-QAOA also an intermediate optimized step energy $E^{\ast}_{\rm norm}(\alpha)$.  In the numerical results the $E^{\ast}_\text{norm}$ is averaged over $100$ experiments. Details of the experiment can be found in Appendix~\ref{app:ex-details}.

For the numerical investigation of optimal QAOA parameters, which is illustrated in Figure~\ref{fig:initializtion-comparison}, we consider three possible initialization choices of the mixer and objective parameters at $\alpha=\alpha_{\rm init}$:
\begin{enumerate}
\item
RR (Random Random): When the parameters corresponding to mixer and objective  Hamiltonians are chosen from a uniform random distribution $\text{U}(0,2\pi)$ i.e. 
$
\gamma_{j}^\text{init} \sim \text{U}(0,2\pi),\;  \beta_{j}^\text{init} \sim \text{U}(0,2\pi). \nonumber
$
\item NZR (Near-Zero Random): The parameters corresponding to mixer  Hamiltonian are chosen from $\text{U}(0,2\pi)$ but objective parameters are sampled from the values very close zero i.e. 
$
\gamma_{j}^\text{init} \sim \text{U}(0,v), \; \beta_{j}^\text{init} \sim \text{U}(0,2\pi),\nonumber
$
where $v$ is $0.05$. 
\item 
ZR (Zero Random): Mixer parameters are sampled from $\text{U}(0,2\pi)$ chosen and objective is all zeros i.e. 
$
\gamma_{j}^\text{init} = 0, \; \beta_{j}^\text{init} \sim \text{U}(0,2\pi)\nonumber
$ as proposed before.
\end{enumerate}
\begin{figure}[t!]
	\centering
	\includegraphics{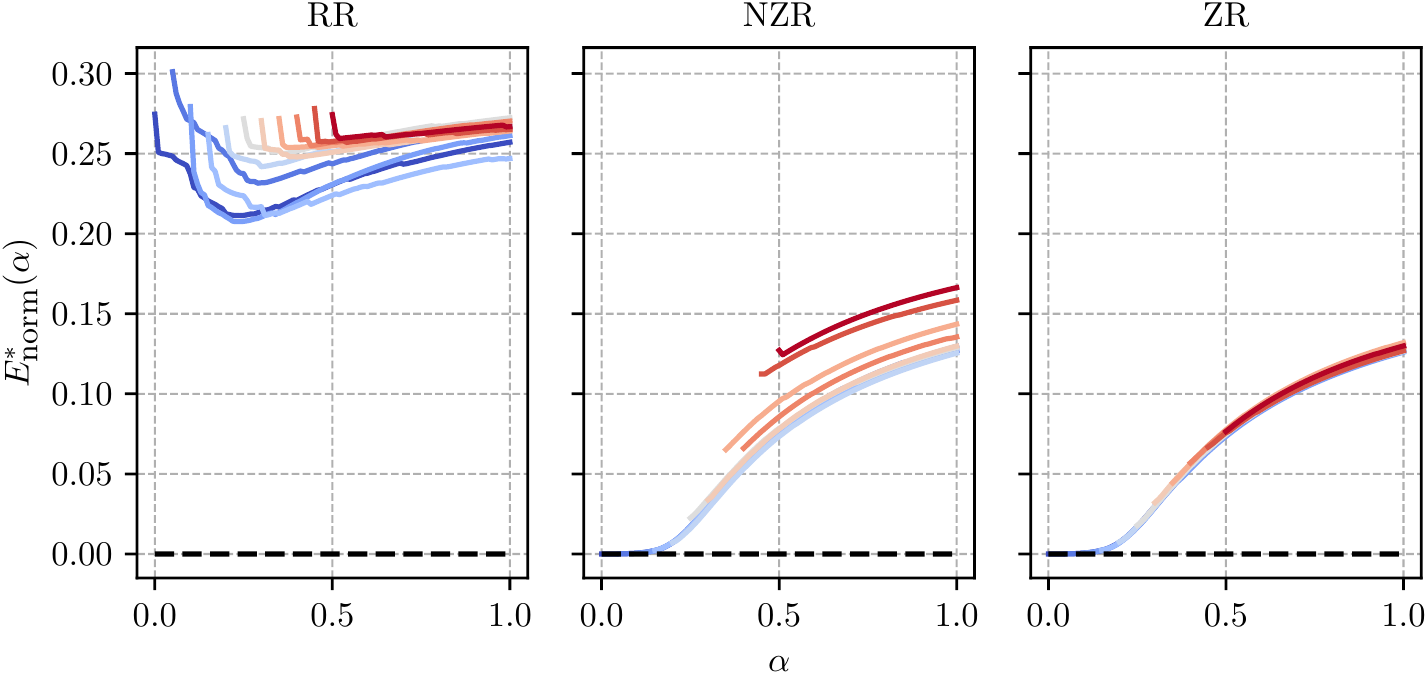}
	\caption{The impact of different methods of initialization of $\gamma_j,\beta_j$ on HOHo-QAOA. The {left}, the {middle} and the {right} figures are representing the convergence for RR (Random Random), NZR (Near-Zero Random) with parameter $v=0.05$, and ZR (Zero Random) initialization respectively, see Sec.~\ref{subsec:init_strategy} for details. It is visible that the {ZR} is outperforming the other two initializations. Although for $\alpha_\text{init} \leq 0.2$ the performance of NZR and ZR are comparable but as we tune $\alpha_\text{init} > 0.2$, the minima for {NZR} scatters in region $0.10<E_\text{norm}<0.15$ whereas the minima for {ZR} clusters in a very narrow $E_\text{norm}$-width.}
	\label{fig:initializtion-comparison}
\end{figure}
\begin{figure}[tbh!]
	\centering
	\includegraphics{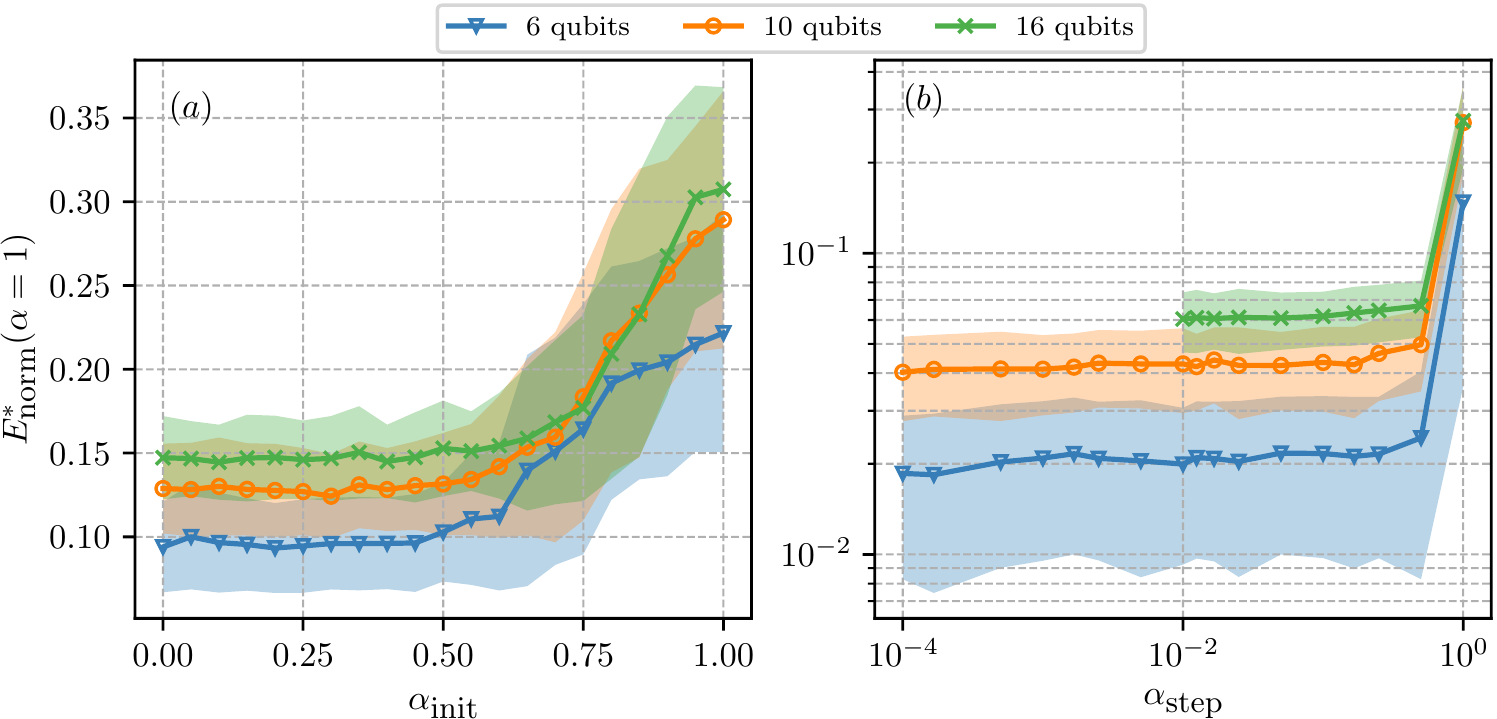}
	\caption{We illustrate the dependency of $E^{\ast}_\textrm{norm}$ with $\alpha_\textrm{init}$ and $\alpha_\textrm{step}$. In (a) the variation of $ E^{\ast}_\text{norm}$ with $ \alpha_\text{init}$ for $3$ layers of HOHo-QAOA is presented, with $\gamma_j^\text{init} = 0, \beta_j^\text{init}\sim\text{U}(0,2\pi)$. In the figure, we see a \textit{region of stability} of HOHo-QAOA in respect with $ \alpha_\text{init}$ in the range $0.0$ to $0.50$. In (b) we present $E^{\ast}_\text{norm}$ vs $\alpha_\text{step}$ using $10$ layers of HOHo-QAOA. Just like in the case of $\alpha_\textrm{init}$, for $\alpha_\textrm{step}$ a same \textit{region of stability} can be observed. This gives us the preference on the choice of \textit{step parameter} while utilizing HOHo-QAOA. It should be noted that the y-axis in (a) is in \texttt{linear} scale and whereas in (b) it is in \texttt{log} scale. The lines in both the plots are taken $\alpha_\textrm{init}$ and $\alpha_\textrm{step}$-wise and is the mean of 100 experiments. The area under the plots are standard deviation of energies.}
	\label{fig:norm-energy-alpha-init}
\end{figure} 
From Figure~\ref{fig:initializtion-comparison} we conclude that ZR gives the best approximation to the ground state. This is because, under the ZR setting the initial parameters of QAOA always starts corresponding to the exact ground state of $H_\text{mix}$ while the $H_\text{obj}$ is turned off. This is within the spirit of the homotopy optimization, in which starting in the optimal solution of the initial system is critical. Hence this good approximation to the initial parameters lead us to the better solution to the ground state of $H_\text{obj}$. Keeping in mind that if we sample $\alpha_\text{init}$ in the range $0\leq\alpha_\text{init}\leq 0.2$, we see that {NZR} shows comparable performance to {ZR} and the choice of initialization of $\gamma_j, \beta_j$ can be either one of them, relaxing the conditions on the choice of $\vec \gamma$ and $\vec \beta$. In the remaining of this paper all the numerical results are initialized with the ZR setting.

Now we move to the analysis of the choice of suitable $\alpha_\text{init}$. In the Figure~[\ref{fig:norm-energy-alpha-init}] we investigate the $\alpha_\text{init}$ dependency of the $E^{\ast}_\text{norm}$, where the energy is averaged over $100$ experiments. From Figure~[\ref{fig:norm-energy-alpha-init}](a) we take $3$ layers of HOHo-QAOA and observe that the mean optimal energy and the corresponding standard deviation remain unchanged (which we term as \textit{region of stability}) with respect to $\alpha_\text{init}$ in the range $0\leq\alpha_\text{init}\leq0.5$. With an increase in the number of nodes from $6$ to $16$, the \textit{region of stability} shifts upwards but remains in the range $0\leq\alpha_\text{init}\leq0.5$. This observation lead us to conclude that $\alpha_\text{init}$ can be chosen detached from zero without degrading the performance of HOHo-QAOA, or that at least that the region of stability does not shrink rapidly with the increased size of the problem. So setting $\alpha_\text{init}$ in the {region of stability} along with $\gamma_j^\text{init} = 0,\beta_j^\text{init} \sim \text{U}(0,2\pi)$ yields a solution with particularly small energy value.

In Figure~[\ref{fig:norm-energy-alpha-init}](b) we investigate how the efficiency of the optimization depends on the $\alpha_\text{step}$. During this investigation, we take $10$ layers of HOHo-QAOA. We observe that in the range $10^{-4}\leq\alpha_\text{step}<0.5$ the approximation to the ground energy and the corresponding standard deviation with increasing  $\alpha_\text{step}\rightarrow0$ remains almost unchanged, giving rise to  a {region of stability} with respect to $\alpha_\text{step}$. This behavior of $E_\text{norm}^\ast$ with $\alpha_\text{step}$ is similar to what we can observe for  $\alpha_\text{init}$. This lead us to a conclusion that one can choose $\alpha_\text{step}$ detached from zero for HOHo-QAOA. It should be noted that due to high simulation cost the experiment for $16$ qubits is halted at the $\alpha_\text{step} = 10^{-2}$, whereas the investigation for $6,\;16$ qubits is extended to $10^{-4}$.

The discussion and numerical results from the previous paragraphs give us the following initialization rules of HOHo-QAOA, which leads to a high efficiency of the method:
\begin{enumerate}
\item The parameters of mixer and objective should be initialized with {ZR} setting i.e. $\gamma_j^\text{init} = 0,\beta_j^\text{init} \sim \text{U}(0,2\pi)$,

\item Although one can infer that  $\alpha_\text{init}\rightarrow0$ along with $\alpha_\text{step}\rightarrow0$ gives the best result, our investigations show that one can choose the homotopy parameters detached from zero. This greatly reduces the cost of simulating HOHo-QAOA 
\end{enumerate}
\begin{figure}[t!]
	\centering
	\includegraphics{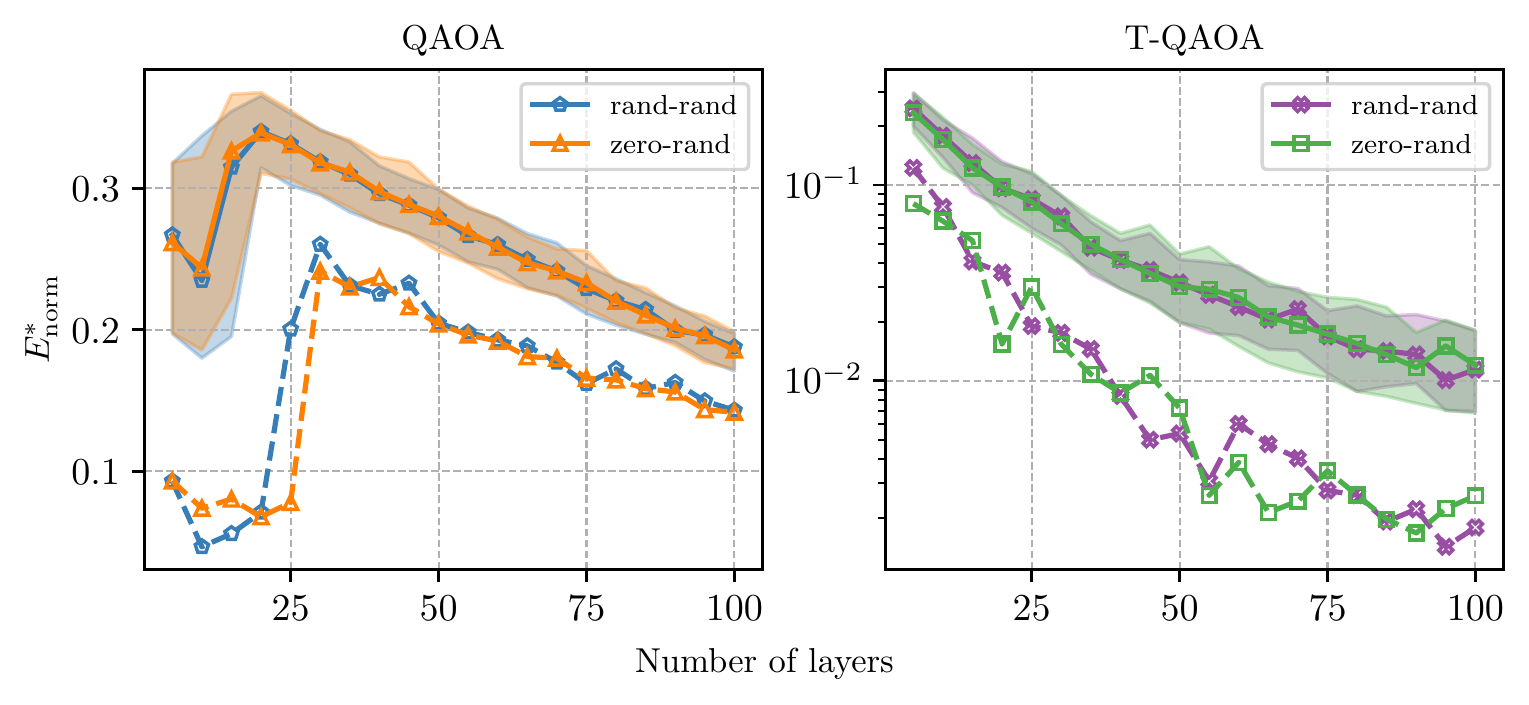}
	\caption{Comparison of different initialization for QAOA and T-QAOA. In the left (right) figure we illustrate how the $E^ *_\text{norm}$ changes with increasing number of layers in QAOA (T-QAOA) under the RR and ZR settings. The solid line is the  median energy over $100$ experiments, meanwhile the dashed line represents the best sample, taken layer-wise and node-wise by choosing the minimum energy among all the experiments. The areas are delimited by the first and third quartile.}
	\label{fig:benchmark-qaoa_t-qaoa}
\end{figure}

\begin{figure}[t!]
	\centering
	\includegraphics{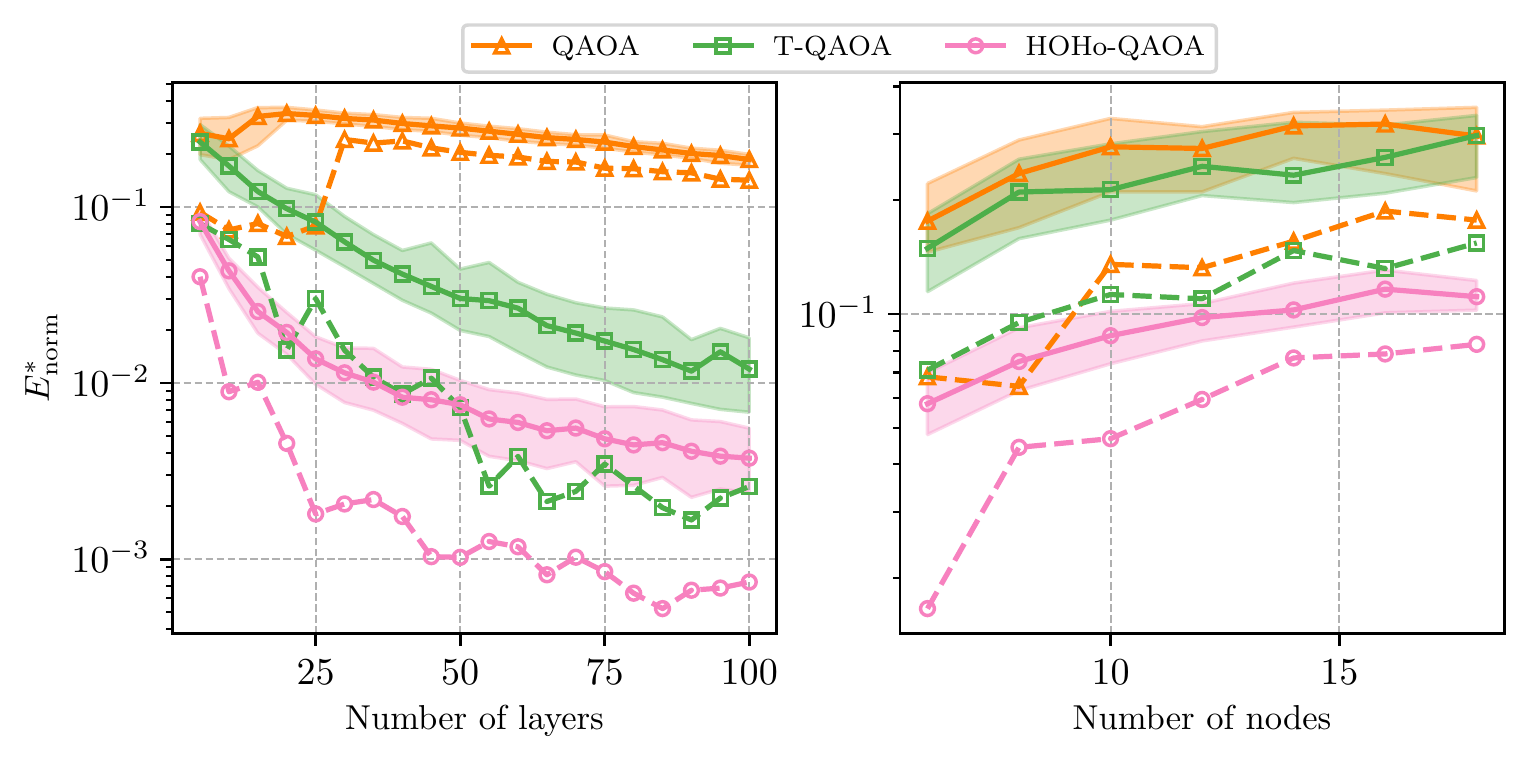}
	\caption{Performance of HOHo-QAOA compared to QAOA and T-QAOA. On both figures, for all the QAOA methods, we applied the ZR settings. The areas are delimited by the first and third quartile. The solid line presents $E^*_\text{norm}$ median over $100$ experiments for the left figure and $50$ experiments for the right figure, and the dashed line represents the best sample, taken layer-wise and node-wise by choosing the minimum energy among all the experiments.  On the left figure, the number of nodes is fixed to 10. On the right, the number of layers is fixed to 5 and the energy is sampled within 6 to 18 nodes. The homotopy parameters are set as $\alpha_\text{init}=0$ and $\alpha_\text{step}=0.01$. One can see that in both cases the averaged energy as well as the best sample of HOHo-QAOA outperforms the other variants of QAOA.  }
	\label{fig:benchmark_all}
\end{figure}

\subsection{Performance analysis}
In this section we  analyze the performance of the introduced algorithm with respect to other optimization strategies introduced above.
While it is natural for HOHo-QAOA to initialize using ZR strategy, 
it is unclear whether this choice will improve or worsen the results for QAOA or T-QAOA. Therefore before comparing state-of-the-art methods to the introduced one, we verify whether there is any difference in the performance for QAOA and T-QAOA with respect to the initialization of the optimized angles. In Fig.~\ref{fig:benchmark-qaoa_t-qaoa} we investigate state-of-the art methods for parameters $(\gamma_j, \beta_j)^\text{init}$ initialized with RR and ZR strategy. We observe that the performance of QAOA and T-QAOA is not influenced by the chosen strategies. This justifies using ZR strategy when comparing QAOA, T-QAOA and HOHo-QAOA.

Note that for QAOA we are observing undesired non-monotonic behavior with respect to the number of layers. We claim that this is caused because of a complicated landscape of the energy function, which makes difficult to optimize it if no information about the problem instance is used during the initialization from large number of nodes. This argument is complies with good performance of T-QAOA where the initial parameters of $(L+1)$-layer step is evaluated based on local optimal solutions of the $L$-layers step. 

In Fig.~\ref{fig:benchmark_all} we compare the performance of HOHo-QAOA with the other variants when $(\gamma_j, \beta_j)^\text{init}$ are initialized using ZR setting.
In the first experiment we run the algorithms with a fixed number of nodes while increasing number layers. In the second experiment the number of layers is fix while we vary the number of nodes. The plots present  optimized energy values, averaged respectively over $100$ and $50$ instances.
 The data shows that the introduced HOHo-QAOA gives us significantly smaller energy in both experiment setups. Good improvements remains  as more layers of the HOHo-QAOA are used and also outperforms the other varients of QAOA for higher number of nodes. This conclusions remain valid also for the best sample solution chosen (dashed line). It should be noted that the HOHo-QAOA outperforms QAOA and the T-QAOA in each and every layer starting from initial layer $5$ to final layer $100$.

\section{Conclusion}\label{sec:conclusion}
In the article we present a novel algorithm for combinatorial optimization. The method is a combination of homotopy optimization with an application in QAOA. In our method the observable used for computing the energy is changed during the optimization process. The process starts with observable being a mixer, for which the initial state of QAOA is a grounds state, and is slowly moved into the objective Hamiltonian. In addition we verify that, although traditionally in the homotopy method the initial value of transition parameter $\alpha$ should be 0 and the step should be as small as possible, for QAOA for the value of considered parameters can be detached from 0. 

A homotopy optimization is an algorithm dedicated for nonlinear optimized functions, and since even simple QAOA landscape is a linear combination of many -- for some problems exponentially many -- sinusoidal functions, our approach is well motivated for such energy function. This is in contrast to typical VQE optimization process, in which the function landscape with respect to a single parameter is just a sine. By comparing our approach and QAOA algorithm with the typical choice of optimization strategies we numerically confirmed that our method outperforms state-of-the-art approaches.

While our algorithm was only presented for QUBO and $X$-mixer,it is not restricted to it. In particular, if the transition function is of the form $H(\alpha) = g_1(\alpha)H_\text{mix}+g_2(\alpha) H_\text{obj}$, we only require energy of the $H_\textrm{mixer}$ to be efficiently computable. This includes XY-mixer~\cite{wang2020x} and Grover mixer~\cite{bartschi2020grover} for which the initial state can be efficiently prepared. Moreover, our approach remains also valid for higher-order binary problems~\cite{tabi2020quantum,glos2022space} and more advanced pseudo-code based QAOA Hamiltonian implementation~\cite{bako2022near}.

\paragraph{Acknowledgment}
$\hat[santa]{\text{A}}$.${\text{K}}$., $\hat[santa]{\text{A}}$.${\text{G}}$. and $\hat[santa]{\text{L}}$.${\text{B}}$. has been partially supported by Polish National Science Center under grant agreements 2019/33/B/ST6/02011. $\hat[santa]{\text{A}}$.${\text{G}}$. acknowledges support from National Science Center under grant agreement 2020/37/N/ST6/02220. The authors would like to thank Zolt\'an Zimbor\'as, \"Ozlem Salehi and Jaros\l aw A. Miszczak for valuable discussions and comments on the manuscript.

\paragraph{Data and code availability} Data and code available in \url{https://doi.org/10.5281/zenodo.7585691} 

\bibliographystyle{ieeetr}
\bibliography{reference}

\appendix
\section{Experiment details}\label{app:ex-details}

In order to enable the simple reproduction of our results, we publish our code on ... The algorithms for generating data and plotting were implemented in Julia and Python programming languages. Versions of the software and additional packages are listed in... 

\paragraph{Experiments} Each experiment of HOHo-QAOA is uniquely characterized by the random graph $G=(V,E)$, that is chosen from Barab\'asi-Albert distribution with $6, 8,\ldots 18$ nodes and with $m=2$, where $m$ defines the number of edges to be attached from a new node to existing ones. The weights corresponding to the edges are picked up from a uniform set of integer weights $w_{jj^\prime}\in\{1,\ldots,10\}$ for each edge $\{j,j^\prime\}$.

\paragraph{Data Sampling} For sampling the objective Hamiltonian, we started by generating graph objects an them converting to Pauli operators objects and Hamiltonian matrices with \texttt{Qiskit}. We generated $100$ samples for each graph setup. We sampled  the initial optimization parameters in a random distribution for RR and ZR approaches.  We emulated the quantum evolution and take an exact expectation energy and gradient of the state during the optimization. We choose the \texttt{L-BFGS} algorithm implemented in Julia’s \texttt{Optim} package as a subroutine. The optimization has no periodic or bounds conditions. We setup Optim with absolute tolerance, relative tolerance and absolute tolerance in gradient equal to $1^{-9}$. We allowed steps that increase the objective value and maximum number of iterations is $10000$.

\paragraph{T-QAOA} For T-QAOA implementation, we initialize with a minimum number of levels $L_0 = 4$ and run the optimization similarly to the state of art QAOA with the a given parameters initiation strategy. The method proceeds checking the convergence of the solution and moving to the next layer $L_0+1$, using the previous optimized parameters with the addition of a zero for the mixer Hamiltonian and a value sampled from a uniform random distribution $\text{U}(0,2\pi)$.

\section{Proof of nonlinear landscape for QAOA}\label{apndx:proof-non-linear-lanscape}

\begin{theorem}
	Let $\varrho$ be an arbitrary quantum state, $H$ be an arbitrary Hamiltonian with spectrum set  $\{E_1,\dots,E_k\}$ and $O$ be an arbitrary observable. Then
	\begin{equation}
		\tr(\exp(-\ii \theta H)\varrho\exp(\ii \theta H) O ) =	C+ \sum_{i>j} A_{i,j}\cos(\theta(E_i-E_j) + B_{i,j}),
	\end{equation}
	for some real values $C,A_{i,j},B_{i,j}$.
\end{theorem}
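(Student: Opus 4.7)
The plan is to diagonalize $H$ and expand the trace in its eigenbasis, then group complex-conjugate pairs of off-diagonal terms into real cosines.

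First I would write the spectral decomposition $H = \sum_i E_i \Pi_i$, where $\Pi_i$ is the orthogonal projector onto the eigenspace of eigenvalue $E_i$. Then $\exp(-\ii \theta H) = \sum_i \ee^{-\ii \theta E_i} \Pi_i$. Substituting this into $\tr(\exp(-\ii\theta H)\varrho\exp(\ii\theta H)O)$ and using linearity and cyclicity of the trace yields
\begin{equation}
\sum_{i,j} \ee^{-\ii \theta(E_i - E_j)} \tr(\Pi_i \varrho \Pi_j O).
\end{equation}

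Next I would split this double sum into the diagonal part ($i = j$) and the off-diagonal part ($i \neq j$). The diagonal part contributes $\sum_i \tr(\Pi_i \varrho \Pi_i O)$, which is $\theta$-independent; this is the constant $C$. For the off-diagonal part, I would pair the $(i,j)$ term with the $(j,i)$ term. Setting $z_{i,j} := \tr(\Pi_i \varrho \Pi_j O)$ and using Hermiticity of $\varrho$ and $O$ together with the cyclicity of the trace, I get $z_{j,i} = \overline{z_{i,j}}$. Therefore each pair collapses to
\begin{equation}
\ee^{-\ii \theta(E_i - E_j)} z_{i,j} + \ee^{\ii \theta(E_i - E_j)} \overline{z_{i,j}} = 2\,\mathrm{Re}\bigl(\ee^{-\ii \theta(E_i - E_j)} z_{i,j}\bigr).
\end{equation}
Writing $z_{i,j} = |z_{i,j}|\ee^{\ii \phi_{i,j}}$ in polar form, this becomes $2|z_{i,j}|\cos(\theta(E_i-E_j) - \phi_{i,j})$, which is of the required form with $A_{i,j} = 2|z_{i,j}|$ and $B_{i,j} = -\phi_{i,j}$.

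I do not expect a significant obstacle; the only subtlety is the treatment of degenerate eigenvalues, where pairs $(i,j)$ with $i \neq j$ but $E_i = E_j$ contribute $\theta$-independent terms that can be absorbed into $C$ (equivalently, into a cosine with vanishing frequency). Using the spectral \emph{set} $\{E_1,\dots,E_k\}$ rather than a list of eigenvectors — as the statement already does — makes this automatic, since the projectors $\Pi_i$ are well defined regardless of multiplicities.
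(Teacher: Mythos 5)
Your proof is correct and follows essentially the same route as the paper's: diagonalize $H$, split the double sum into the $\theta$-independent diagonal part and conjugate off-diagonal pairs, and combine each pair into a single cosine with a phase. The only (cosmetic) differences are that you use spectral projectors $\Pi_i$ and polar form, which handles degenerate eigenvalues automatically, whereas the paper works in a full eigenbasis, uses the Pythagorean identity to extract the phase, and deals with degeneracy by a separate remark.
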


\begin{proof}
	Let $U$  be a unitary that diagonalizes the Hamiltonian $H$. Then we have
	\begin{equation}
		\begin{split}
			\tr(\exp(-\ii \theta H)\varrho\exp(\ii \theta  H) O ) &= \tr \left( \sum_{i=1}^k (U e^{-\ii\theta E_i} \ketbra{i} U^\dagger )\varrho\sum_{j=1}^k (U e^{\ii \theta E_j} \ketbra {j}U^\dagger  )O\right)\\
			&= \sum_{i=1}^k \sum_{j=1}^k e^{\ii \theta (E_j- E_i)}\tr \left(    U \ketbra i U^\dagger \varrho U \ketbra j U^\dagger O\right)\\
			&= \sum_{i=1}^k \sum_{j=1}^k e^{\ii \theta (E_j- E_i)}\tr \left(     \ketbra i  \varrho' \ketbra j  O'\right)\\
			&=  \sum_{i=1}^k \sum_{j=1}^k e^{\ii \theta (E_j- E_i)}     \bra i  \varrho' \ket j \bra j  O' \ket i,
		\end{split}
	\end{equation}
	where $\varrho' = U^\dagger \varrho U$ and $O' = U^\dagger O U$. Since $\varrho'$ is a hermitian operator and therefore $\bra i \varrho \ket j = \overline{ \bra j \varrho  \ket i}$, and similarly for $O'$,  therefore for any $i, j$ the term for $i>j$ is a conjugate of the term $i<j$. Hence
	\begin{equation}
		\sum_{i=1}^k \sum_{j=1}^k e^{\ii \theta (E_j- E_i)}     \bra i  \varrho' \ket j \bra j  O' \ket i
		=  \sum_{i=1}^k   \bra i  \varrho' \ket i \bra i  O' \ket i   +2 \sum_{i>j}\Re   e^{\ii \theta (E_j- E_i)}     \bra i  \varrho' \ket j \bra j  O' \ket i.
	\end{equation}
	Note that the left hand side sum in the above above is a free term and is a real number. Starting from now we will assume that the Hamiltonian $H$ is non-degenerate -- otherwise the corresponding element of the right sum will contribute to the free term. Taking $x_{i,j} + \ii y_{i,j}\coloneqq \bra i  \varrho' \ket j \bra j  O' \ket i$  for some real $x_{i,j}, y_{i,j}$ we have
	\begin{multline}
		\Re   e^{\ii \theta (E_j- E_i)}     \bra i  \varrho' \ket j \bra j  O' \ket i
		= \Re (\cos(\theta (E_j- E_i)) + \ii \sin(\theta (E_j- E_i))) (x_{i,j} + \ii y_{i,j}) \\
		=x_{i,j}  \cos(\theta (E_j- E_i)) - y_{i,j} \sin(\theta (E_j- E_i)) \\
		 = \sqrt{x_{i,j}^2 + y_{i,j}^2} \left ( \frac{x_{i,j}}{\sqrt{x_{i,j}^2 + y_{i,j}^2}}\cos(\theta (E_j- E_i))  - \frac{y_{i,j} }{\sqrt{x_{i,j}^2 + y_{i,j}^2}} \sin(\theta (E_j- E_i))  \right ) \\
		= \sqrt{x_{i,j}^2 + y_{i,j}^2} \left(\cos(\alpha_{i,j})  \cos(\theta (E_j- E_i)) - \sin (\alpha_{i,j}) \sin(\theta (E_j- E_i))  \right),
	\end{multline}
	where $\alpha_{i,j}$ is such a real number for which the above transformation holds. Note that such a number $\alpha$ can always be found as the replaced fraction squared sum to 1 and one can use Pythagorean trigonometric identity. Finally we have
	\begin{multline}
		\sqrt{x_{i,j}^2 + y_{i,j}^2} \left (\cos(\alpha_{i,j})  \cos(\theta (E_j- E_i)) - \sin (\alpha_{i,j}) \sin(\theta (E_j- E_i))  \right )\\
		= \sqrt{x_{i,j}^2 + y_{i,j}^2} \cos(  \theta(E_j- E_i) + \alpha_{i,j} ),
	\end{multline}
	which proves the statement of the theorem.
\end{proof}
Note that the case of Hamiltonian with two different eigenvalues was already presented in \cite{ostaszewski2021structure, nakanishi2020sequential}.

\end{document}